\newtheorem{thm}{Theorem}
\newtheorem{lem}{Lemma}
\newtheorem*{cor*}{Corollary}
\newcommand{\manifold}{\mathcal{M}}
\newcommand{\lightcone}{\mathcal{H}}
\DeclareMathOperator {\tr}{tr}
\begin{document}
 \title{On almost {Ehlers-Geren-Sachs} theorems}

\author[1]{Ho Lee\footnote{holee@khu.ac.kr}}
\author[2]{Ernesto Nungesser\footnote{em.nungesser@upm.es}}
\author[3]{John Stalker\footnote{stalker@maths.tcd.ie}}
\affil[1]{Department of Mathematics and Research Institute for Basic Science, Kyung Hee University, Seoul, 02447, Republic of Korea}
\affil[2]{M2ASAI, Universidad Polit\'{e}cnica de Madrid, ETSI Navales, Avda. de la Memoria, 4, 28040 Madrid, Spain}
\affil[3]{School of Mathematics, Trinity College, Dublin 2, Ireland}

\maketitle
\begin{abstract}
We show assuming small data that massless solutions to the reflection symmetric Einstein-Vlasov system with Bianchi VII$_0$ symmetry which are not locally rotational symmetric, can be arbitrarily close to  and will remain close to isotropy as regards {to} the shear. However in general the shear will not tend to zero and the Hubble normalised Weyl curvature will blow up. {This generalises} the work \cite{NHW,WHU}, which considered a non-tilted radiation fluid to the massless Vlasov case. This represents another example of the fact that almost {Ehlers-Geren-Sachs} theorems do not hold in general and that collisionless matter behaves differently than a perfect fluid.
 \end{abstract}
 
 \section{Introduction}
 
Since the microwave background is almost isotropic it is natural to consider an isotropic matter distribution. What are the consequences of this assumption for the space-time?  The theorem of Ehlers, Geren and Sachs \cite{EGS} gave an answer to this question, proving that for collisionless matter the space-time has to be either stationary or Robertson-Walker. This was generalised later to the Boltzmann case \cite{TE}. 

What happens if the matter distribution is almost isotropic and the universe is non-stationary? One might think that the space-time has to be almost isotropic. Different results were obtained proving almost {Ehlers-Geren-Sachs} theorems following the research line initiated in \cite{SME}.  

However,  these results were obtained under {the assumption that the dimensionless time and spatial derivatives of the CMB temperature multipoles are bounded by the CMB temperature multipoles themselves, which does not hold in general} \cite{NUWL,R}. This assumption implies in particular that the Hubble normalised Weyl curvature is bounded.

If a cosmological constant is present, non-linear stability and isotropisation of solutions {have} been shown for a variety of matter models {(}cf. \cite{JTV,Ring} and references therein{)}. What happens if no cosmological constant is present?

In \cite{WHU} it was shown that Bianchi VII$_0$ solutions which are not locally rotational symmetric (LRS) with a non-tilted perfect fluid isotropise as regards {to} the shear but do not isotropise as regards {to} the Hubble normalised Weyl curvature. They also showed that self-similarity breaking occurs for any non-LRS Bianchi VII$_0$ solution. This was proven for any non-tilted fluid except for a radiation fluid. The latter case was proven in \cite{NHW}. Afterwards these results were extended to a tilted fluid in \cite{CHe,HervikVII0, LDW}.

One could argue that these results are special since the matter model is a perfect fluid, but more recently the massive case for reflection symmetric solutions to the Einstein-Vlasov system with Bianchi VII$_0$ symmetry was covered in \cite{LN2}. Since the microwave background comes from massless particles it is of special interest to treat this case and in the present paper we obtain a similar result for the massless case assuming small data. {More specifically we assume that the shear, certain variables related to the second and fourth order moments normalised by the energy density and the inverse of some curvature variable are small.}

The massless case (radiation fluid) when treating a non-tilted fluid \cite{NHW}  or a tilted fluid \cite{LDW} for a fluid with a linear equation of state $P=(\gamma-1)\rho$ was more complicated in the case $\gamma=\frac43$. The reason is that in this case some eigenvalues of the linear part vanish and center manifold theory has to be used. As a consequence the shear variables have a polynomial decay for $\gamma =\frac43$ \cite{LDW,NHW} while they have an exponential decay for $\frac23 <\gamma < 2$ with $\gamma \neq \frac43$ \cite{LDW,WHU} .

In the Vlasov case the massive and the massless case are also different. In the massive case the solutions tend to the same behaviour as in the dust case. One has a system of differential equations where the linear part has negative eigenvalues. In the {present} massless case the linear part of the system of differential equations has an eigenvalue which vanishes, but now instead of an equilibrium point there is an equilibrium line. 

Nevertheless in both massive and massless Vlasov case there is an exponential decay. The behaviour at late times thus differs from that of a radiation fluid where a polynomial decay was found \cite{NHW}.

Moreover in \cite{LDW,NHW} it was shown that the shear tends to zero both in the non-tilted and the tilted case. We show here that for collisionless {matter}, although the shear will always remain small, it nevertheless does not tend to zero, which is another difference with respect to a fluid.

Finally the result of this paper extends the known results concerning the late time behaviour of massless solutions to the Einstein-Vlasov system with Bianchi symmetries {for which there have been some recent progress} \cite{B,BFH,LNT2}.

The structure of the paper is as follows. In Section \ref{masslessEV} we present the massless Einstein-Vlasov system with Bianchi symmetry following \cite{Ring}. In particular we develop the evolution equations of two key variables \eqref{wpm} and \eqref{defxi} which are respectively the components of the second and the fourth order moments of the particle distribution function normalised by the energy density. In Section \ref{rswhu} we obtain the equations for the massless Einstein-Vlasov system with Bianchi VII$_0$ and reflection symmetry and introduce some variables adapted to the problem following \cite{WHU}. Section \ref{main} is the core of this paper where Theorem \ref{bootthm} is proven. In particular the late time behaviour of solutions to the massless Einstein-Vlasov system with Bianchi VII$_0$ and reflection symmetry assuming small data is obtained. This is achieved using a bootstrap argument and analysing a reduced system of equations which we have called the truncated system. The result on the truncated system is Lemma 1. The section finishes with a corollary which shows that the Hubble normalised Weyl curvature blows up for large times.  This implies in particular that for massless collisionless matter, `an almost isotropic cosmic microwave temperature does not imply an almost isotropic universe' \cite{NUWL}, since the conditions of small shear and small Hubble normalised Weyl curvature one would assume for an almost isotropic universe (cf. (3) of \cite{NUWL}) do not hold. In the last section we discuss the results obtained, make some conclusions and consider future perspectives.
 
Throughout the paper we assume that Greek letters run from $0$ to $3$, while Latin letters vary from $1$ to $3$, and also follow the sign conventions of \cite{Ring}.
 
 \section{The massless Einstein-Vlasov system}\label{masslessEV}
 
In this section we begin by introducing the massless Einstein-Vlasov system in general. Then, in Section \ref{bianchi} we particularise to the massless Einstein-Vlasov system with Bianchi symmetry, define the variables \eqref{wpm} and obtain some bounds on the latter variables. Finally in Section \ref{vlasovbianchi} we introduce several variables related to higher order moments and their time derivatives.

Consider a four-dimensional oriented and time oriented Lorentzian manifold $(\manifold, {^4g})$ and a particle distribution function $f$. Then,  the massless Einstein-Vlasov system is written as
\begin{align*}
G_{\alpha\beta}&= T_{\alpha \beta},\\
\mathcal{L} f&=0,
\end{align*}
where $G_{\alpha\beta}$ and $T_{\alpha\beta}$ refer to components of the the Einstein tensor and the stress energy tensor respectively and $\mathcal{L}$ the Liouville operator. In the present case we assume that the stress energy tensor is described as follows:
\begin{align*}
T_{\alpha\beta}=  \int_{\lightcone 
\setminus  \{0\} } \chi p_{\alpha} p_{\beta},
\end{align*}
where $p^{\alpha}$ are the four momenta of the particles which will be future oriented according to the time orientation we will introduce in \eqref{fourmetric} and the integration is over the future pointing light-cone $\lightcone$ at a given space-time point which is defined by
\[
p_{\alpha} p_{\beta}  \,  {^ 4 g}^{\alpha\beta}=0, \quad p^0>0
\]
with the apex removed. The quantity $\chi$ is the particle distribution function multiplied by the Lorentz invariant measure.

The basic equations we will use can be found in Sections 7.3--7.4 and Chapter 25 of \cite{Ring}. We also refer to this book for an introduction to the Einstein-Vlasov system. Let $\Sigma$ be a spacelike hypersurface in $\manifold$ with $n$ its future directed unit normal. Let $g$ be the Riemannian metric induced on $\Sigma$ by $^4g$. We define the second fundamental form as $k(X,Y)=g(\nabla_X n, Y)$ for vectors $X$ and $Y$ tangent to $\Sigma$, where $\nabla$ is the Levi-Civita connection associated with $^4g$. The Hamiltonian and momentum constraints are as follows:
\begin{align*}
&R-{k}_{ij} {k}^{ij}+ {k}^2=2 \rho,\\
&\overline{\nabla}^j {k}_{ji}-\overline{\nabla}_i  {k}= -{J}_i,
\end{align*}
where $k=k_{ab}g^{ab}$ is the trace of the second fundamental form $k(X,Y)$ of $\Sigma$, $R$ and $\overline{\nabla}$ are the scalar curvature and the Levi-Civita connection of ${g}$ respectively, and matter terms are given by $\rho= T_{\alpha\beta}n^{\alpha}n^{\beta}$ and $J_i X^i=- T_{\alpha\beta}n^{\alpha}X^{\beta}$ for $X$ tangent to $\Sigma$.

\subsection{The massless Einstein-Vlasov system with Bianchi symmetry}\label{bianchi}

A Bianchi spacetime is defined to be a spatially homogeneous spacetime whose isometry group possesses a three-dimensional subgroup that acts simply transitively on spacelike orbits. A Bianchi spacetime admits a Lie algebra of Killing vector fields. These vector fields are tangent to the group orbits, which are the surfaces of homogeneity. Using a left-invariant frame, the metric induced on the spacelike hypersurfaces depends only on the time variable. Let $G$ be the three-dimensional Lie group, $e_i$ a basis of the Lie algebra, and $\xi^i$ the dual of $e_i$. The metric of the Bianchi spacetime in the left-invariant frame is written as 
\begin{align}\label{fourmetric}
^4 g =-dt \otimes dt + g_{ij}\xi^i \otimes \xi^j
\end{align}
on $\manifold=I \times G$ with $e_0=\frac{\partial}{\partial t}$ future oriented. Define the structure constants by
\begin{align*}
[e_i,e_j]= C_{ij}^l e_l.
\end{align*}
We will need equations (25.17)--(25.18) of \cite{Ring} (without scalar field) with the notation $T_{ab}=S_{ab}$:
\begin{align}
&\dot{g}_{ab}=2k_{ab},\label{a} \\
&\dot{k}_{ab}=-R_{ab}+2 k^i_a k_{bi} -k\, k_{ab}+S_{ab}\label{EE},
\end{align}
where the dot means the derivative with respect to time $t$ and $R_{ab}$ are the components of the Ricci tensor associated to the induced 3-metric. Note that in the massless case 
\begin{align}\label{massless}
g^{ab} S_{ab}=\rho.
\end{align}
Since $k$ does not depend on spatial variables, the constraint equations are as follows: 
\begin{align}
&R-{k}_{ij} {k}^{ij}+ {k}^2=2 \rho,\label{CE1} \\
&\overline{\nabla}^j {k}_{ji}= -{J}_i.\label{CE2}
\end{align}
Moreover, following the conventions of \cite{Ring}, we have
\begin{align*}
\nabla_{e_j} e_l = \Gamma^i_{jl} e_i,
\end{align*}
 and the connection coefficients can be expressed in terms of the structure constants \cite{Ring}:
\begin{align}\label{Gamma}
    \Gamma^{i}_{jl}=\frac12g^{mi}(-C^n_{lm}g_{nj}+C^{n}_{mj}g_{ln}+C^{n}_{jl}g_{nm}).
\end{align}
From the last equation we obtain 
\begin{align}\label{sumcon}
\Gamma^i_{il}=C^n_{nl}, \quad \Gamma^{i}_{jj}= g^{mi} g_{nj} C^n_{mj}, \quad \Gamma^i_{jl} g^{jl}= g^{mi} C_{ml}^l,
\end{align}
since $g^{mi}C_{mi}$ vanishes due to the symmetry of the metric and the antisymmetry of the structure constants and there is no summation on the index $ j $ in the second expression. 
Moreover the only non-zero components of $\Gamma^{\alpha}_{\beta\gamma}$ are $\Gamma^{a}_{bc}$ and
\begin{align}\label{Ring254}
\Gamma^0_{ab}=k_{ab}, \quad \Gamma^b_{a0}=\Gamma^b_{0a}=k^b_a.
\end{align}

Now, we wish to express the momentum constraint in terms of the connection coefficients. We have
\begin{align*}
\nabla^a{k}_{bc}= g^{ad} \nabla_d k_{bc} = g^{ad} \left(\frac{\partial  k_{bc}}{\partial x^d} - \Gamma^f_{db}     k_{fc}  -\Gamma^f_{dc}  k_{bf}\right)= -g^{ad} (\Gamma^f_{db} k_{fc} +\Gamma^f_{dc}  k_{bf}),
\end{align*}
which implies
\begin{align*}
J_i= g^{ad} (\Gamma^f_{da} k_{fi} +\Gamma^f_{di}  k_{af}),
\end{align*}
 and that one can express $J_i$ in terms of the metric, the second fundamental form and the structure constants. Using \eqref{sumcon} and considering the Bianchi A case where $C_{ml}^l=0$, the last equation turns to
 \begin{align*}
 J_i =k^m_n C^n_{mi}.
 \end{align*}
Now denote by $\epsilon_{ijk}$ the standard permutation symbol. For Bianchi A spacetime we have that (cf. E.1 of \cite{Ring} with $a_j=0$) 
\begin{align}\label{symm}
C^{n}_{mi}= \epsilon_{mil}n^{ln},
\end{align}
where $n^{lk}$ is a symmetric matrix also called the structure constant matrix which characterises the Bianchi type. We thus have:
\begin{align}\label{J}
J_i =  \epsilon_{mil}n^{ln}k^m_n.
\end{align}

Below, we collect and derive several useful equations. Using the fact that 
\begin{align}\label{inverse}
\dot{g}^{ab}=-2k^{ab},
\end{align}
we obtain
\begin{align}\label{MV}
\dot{k}^a_b= -R^a_b-k\,k^a_b +S^a_b,
\end{align}
and the trace of (\ref{MV}) with respect to the induced metric, contracting with $ \delta_a^b $ gives us
\begin{align}\label{im}
\dot k=-R-k^2+\rho.
\end{align}
It is convenient to express the second fundamental form as
\begin{align*}
k_{ab}=\sigma_{ab}+H g_{ab},
\end{align*}
where $\sigma_{ab}$ is trace free, and $H=\frac13 k$ is the Hubble parameter. Then \eqref{im} becomes
\begin{align*}
\dot{H}=-3H^2-\frac{1}{3}R+\frac13 \rho,
\end{align*}
and \eqref{CE1} becomes
\begin{align}\label{omega}
\Omega= \frac{\rho}{3H^2}= 1+\frac16\bar{R}-\frac16 F,
\end{align}
where $\bar{R}=\frac{R}{H^2}$ and $F=\frac{\sigma_{ab}\sigma^{ab}}{H^2}$.

In terms of the trace free part \eqref{MV} transforms into
\begin{align*}
\dot{\sigma}^a_b= -3H \sigma^a_b -R^a_b +S^a_b-(3H^2+\dot{H})\delta^a_b= -3H \sigma^a_b -R^a_b +S^a_b-(-\frac{1}{3}R+\frac13 \rho)\delta^a_b,
\end{align*}
or
\begin{align}\label{sigma}
\dot{\sigma}^a_b&=    -3H\sigma^a_b - r^{a}_b+\pi^a_b,
\end{align}
where $r^{a}_b$ and $\pi^a_b$ are the trace free part of $R^a_b$ and $S^a_b$ respectively.

By the constraint equation \eqref{CE1} one can eliminate the energy density such that (\ref{im}) reads:
\begin{align}\label{in}
\dot k=-\frac12 R -\frac12 k^2-\frac12 k_{ij}k^{ij}.
\end{align}
Using the trace free part of $k_{ab}$ and the Hubble variable we obtain
\begin{align}\label{H-1}
\frac{d}{dt}(H^{-1})=-\frac{\dot{H}}{H^2}=2+\frac16 \bar{R} + \frac{\Sigma_a^b\Sigma^a_b}{6},
\end{align}
where we have defined
\begin{align*}
\Sigma_a^b=\frac{\sigma_a^b}{H}.
\end{align*}
It is convenient to introduce a dimensionless time variable $\tau$ as follows:
\begin{align}\label{deftau}
    \frac{dt}{d\tau}=H^{-1},
\end{align}
and denote derivation with respect to that variable by a prime.  Sometimes it is also useful to use the variable $q$:
\begin{align*}
    q=-1-\frac{\dot{H}}{H^2}= 1+\frac16\bar{R}+\frac16 F,
\end{align*}
where we have used \eqref{H-1} in the last equation.
The evolution equation of ${\Sigma}^a_b$ is then
\begin{align*}
({\Sigma}^a_b)'=-\left(3+\frac{\dot{H}}{H^2}\right)\Sigma^a_b+\frac{\pi^a_b-r^a_b}{H^2}=\left(q-2\right)\Sigma^a_b +\frac{\pi^a_b-r^a_b}{H^2}.
\end{align*}
Using \eqref{H-1} and \eqref{omega} we have
\begin{align*}
(\Sigma^a_b)'=\left(-1+\frac16\bar{R}+\frac16 F \right)\Sigma^a_b+3\left(1+\frac16\bar{R}-\frac16F\right)\frac{\pi^a_b-r^a_b}{\rho}.
\end{align*}

Since $\Sigma^a_b$ is trace free sometimes it is convenient to work with $\Sigma_+$ and $\Sigma_-$ as was done in \cite{LN2} which are defined by
\begin{align*}
\Sigma_+=\frac{1}{2H}\left (\sigma^2_2+\sigma^3_3\right), \quad
\Sigma_-=\frac{1}{2\sqrt{3}H}\left (\sigma^2_2-\sigma^3_3\right),
\end{align*}
so that 
\begin{align}\label{Sigma}
(\Sigma^1_1,\Sigma^2_2,\Sigma_3^3)= (-2\Sigma_+,\Sigma_++\sqrt{3}\Sigma_-,\Sigma_+-\sqrt{3}\Sigma_-).
\end{align}

Using $H^2=\frac{\rho}{3\Omega}$ we define $w_\pm$ analogously to $\Sigma_{\pm}$ by
\begin{align}\label{wpm}
    &w_+ = \frac{\pi^2_2+\pi^3_3}{2\rho},\\
   \label{wpm2} &w_- = \frac{\pi^2_2-\pi^3_3}{2\sqrt{3}\rho}.
\end{align}
Note that by definition $\pi^2_2 = S^2_2 - \frac13 \tr S$,  $\pi^3_3 = S^3_3 - \frac13 \tr S$ and using \eqref{massless} we obtain
\begin{align*}
w_+ = \frac{S^2_2-\frac13 \rho+  S^3_3-\frac13 \rho}{2\rho},\quad w_- = \frac{S^2_2-  S^3_3}{2\sqrt{3} \rho}.
\end{align*}
Using the fact that $0 \leq S^2_2+S_3^3 \leq  \tr S= \rho$ and $0\leq S^2_2 \leq \tr S$, $0\leq S^3_3 \leq \tr S$, we obtain the following bounds for $w_{\pm}$:
\begin{align}\label{boundw}
-\frac13 \leq w_+ \leq \frac 16, \quad -\frac{1}{2\sqrt{3}} \leq w_- \leq \frac{1}{2\sqrt{3}}.
\end{align}
The evolution equations of $\Sigma_{\pm}$ are thus
\begin{align}
\label{plus}&{\Sigma}_+'=(q-2)\Sigma_++\frac{2R-3(R^2_2+R^3_3)}{6H^2}+3w_+\Omega,\\
\label{minus}&{\Sigma}_-'=(q-2)\Sigma_-+\frac{R_3^3-R^2_2}{2\sqrt{3}H^2}+3w_-\Omega.
\end{align}

\subsection{Vlasov equation with Bianchi symmetry}\label{vlasovbianchi}

We use a left-invariant and assume that the spacetime has a Bianchi symmetry so that the particle distribution function $f$ can be written as a function of $ t $ and $ p $. Moreover, we assume that the particle distribution function $f$ has compact support for simplicity. Since $g_{00}=g^{00}=-1$ and $g^{0a}=0$, we have $p^0=-p_0=\sqrt{p_ap_bg^{ab}}$, $\rho=T_{00}$, and $J_{a}=-T_{0a}$. The frame components of the energy-momentum tensor are thus
\begin{align*}
&\rho=(\det g)^{-\frac12}  \int f p^0 dp,\quad J_{i}=(\det g)^{-\frac12} \int f p_i  dp,\\
&S_{ij}=(\det g)^{-\frac12} \int f \frac{p_i p_j} {p^0}dp,
\end{align*}
where the particle distribution function is understood as $f=f(t,p)$ with $p=(p_1,p_2,p_3)$ and $dp=dp_1dp_2dp_3$.

Using the expressions for the connection coefficients and the antisymmetry of the structure constants the Vlasov equations as expressed in (25.14) of \cite{Ring} where $f$ depends on $p^i$ turns into
\begin{align*}
p^0 \frac{\partial f}{\partial t} = \left(2 k^i_b p^0 p^b + g^{mi} C^{d}_{ma} p^a p_d \right) \frac{\partial f}{\partial p^i}.
\end{align*}
Considering $f$ as function of $p_i$, i.e. $f(t,p^i)=\bar{f}(t,p_i)=\bar{f}(t,g_{ij}p^j)$, we have that 
\begin{align*}
&\frac{\partial f}{\partial t}= \frac{\partial \bar{f}}{\partial t}+ \frac{\partial \bar{f}}{\partial p_i}\dot{g}_{ij}p^j, \\
&\frac{\partial f}{\partial p^j}= \frac{\partial \bar{f}}{\partial p_i}g_{ij},
\end{align*}
from which it follows that the equation for the particle distribution function in terms of $p_i$ dropping the bar by a slight abuse of notation is
\begin{align}\label{vlasovstructure}
    p^0\frac{\partial f}{\partial t}+C^d_{ba}p^bp_d \frac{\partial f}{\partial p_a}=0.
\end{align}

When treating the Bianchi I case in \cite{LNT2} the following quantity was considered:
\begin{align}\label{w}
w^j_i = \frac{S^j_i}{\rho}=\frac{\int f p_i p_a g^{aj} (p^0)^{-1}dp}{\int f p^0 dp}.
\end{align}
In the Bianchi I case $f$ does not depend on $t$, but in general it does. Let us consider the terms where the time derivative of $f$ is involved. Consider first the following quantity:
\begin{align*}
D=\int \frac{\partial f}{\partial t} p^0 dp.
\end{align*}
Using the Vlasov equation \eqref{vlasovstructure} and integrating by parts we have
\begin{align*}
D &=\int \frac{\partial f}{\partial t} p^0 dp=-  C^d_{ba}\int p^bp_d \frac{\partial f}{\partial p_a} dp =   C^d_{ba}\int f \frac{\partial }{\partial p_a} (p^bp_d) dp\\
&= C^d_{ba}\int f (\delta^a_d p^b  + p_d g^{ab})dp = C^a_{ba}\int f  p^b dp,
\end{align*}
where we have used again the fact that $g^{ab}$ is symmetric while $C^d_{ba}$ is antisymmetric which implies $g^{ab}C^d_{ba}=0$. Moreover for all Bianchi A spacetimes $C^a_{ba}=0$ which means that $D$ vanishes in that case.

Consider now 
\begin{align*}
V^j_i  = \int \frac{\partial f}{\partial t} p_i p_f g^{fj} (p^0)^{-1} dp.
\end{align*}
Using again the Vlasov equation and integrating by parts
\begin{align*}
V^j_i & = \int \frac{\partial f}{\partial t} p_i p_f g^{fj} (p^0)^{-1} dp =-C^d_{ba} \int \frac{\partial f}{\partial p_a} \frac{ p^bp_d p_i p_f g^{fj}}{ (p^0)^{2}} dp \\ 
& = C^d_{ba}g^{fj}g^{eb} \int f \frac{\partial }{\partial p_a} \left[ \frac{ p_e p_d p_i p_f }{ (p^0)^{2}} \right] dp.
\end{align*}
If $e$ is equal to $a$ this term vanishes due to the antisymmetry of the structure constants and for $d=a$ the term vanishes for Bianchi A spacetimes. Let us consider Bianchi A spacetimes. In that case
\begin{align*}
V^j_i= C^d_{ba} g^{fj}g^{eb}\int f p_e p_d \frac{\partial }{\partial p_a} \left[ \frac{  p_i p_f }{ (p^0)^{2}} \right] dp = C^d_{ba} g^{fj}\int f p^b p_d \frac{\partial }{\partial p_a} \left[ \frac{  p_i p_f }{ (p^0)^{2}} \right] dp. 
\end{align*}
Now
\begin{align*}
\frac{\partial (p^0)^2}{ \partial p_a} = 2 p^a.
\end{align*}
The resulting expression in the previous integral with the term $C^d_{ba}p^bp^a$ vanishes due to the antisymmetry of the structure constants. Thus
\begin{align}\label{V}
V^j_i=  C^d_{ba} g^{fj}  \int f \frac{p^b p_d}{(p^0)^{2}} \frac{\partial }{\partial p_a} \left[  p_i p_f \right] dp = C^d_{ba}g^{fj}  \int f \frac{p^b p_d}{(p^0)^{2}} \left[ \delta^a_i p_f +\delta^a_f p_i\right] dp.
\end{align}
Define
\begin{align}\label{W}
W^j_i= \frac{V^j_i}{H\int f p^0 dp}.
\end{align}
The derivative of $w_i^j$ with respect to $\tau$ is then:
\begin{align}\label{wev}
(w_i^j)'=-2w_i^a \Sigma^j_a+w^b_a \Sigma^a_b w_i^j+ \Sigma^{c}_d\xi_{ic}^{jd}+W^j_i,
\end{align}
where 
\begin{align}\label{defxi}
\xi_{ic}^{jd}=\frac{\int f p_ip^jp_cp^d (p^0)^{-3}dp}{\int fp^0 dp}.
\end{align}
Moreover, the derivative of ${\xi}_{ic}^{jd}$ is
\begin{align}\label{evxi}
\left({\xi}_{ic}^{jd}\right)'=-2\Sigma^d_f {\xi}_{ic}^{jf}-2\Sigma^j_f{\xi}_{ic}^{fd}+3\Sigma^a_b \Phi_{ica}^{jdb}+ {\xi}_{ic}^{jd} w_a^b \Sigma^a_b + Z_{ic}^{jd},
\end{align}
where
\begin{align*}
\Phi_{ica}^{jdb} =  \frac{\int {f} p_ip^jp_cp^dp_ap^b (p^0)^{-5} dp}{\int {f} p^0 dp},
\end{align*}
and
\begin{align}\label{ZZ}
Z_{ic}^{jd}= \frac{\int \partial_ t f p_ip^jp_cp^d (p^0)^{-3}dp}{H\int fp^0 dp}.
\end{align}

Consider the integral in the numerator of \eqref{ZZ} and call it $X_{ic}^{jd}$. Doing a similar procedure as for $V^j_i$, i.e. using the Vlasov equation and integrating by parts yield:
\begin{align}
X_{ic}^{jd}&= \int \partial_ t f p_ip^jp_cp^d (p^0)^{-3}dp = - C^g_{ba}  \int  \frac{\partial f}{\partial p_a} p^bp_g p_ip^jp_cp^d (p^0)^{-4}dp \nonumber \\
                & =C^g_{ba} \int f \frac{\partial }{\partial p_a} \left[\frac{p^bp_g p_ip^jp_cp^d }{(p^0)^{4}}\right] dp = C^g_{ba} \int f \frac{p^b p_g}{(p^0)^4}\frac{\partial }{\partial p_a}  (p_ip^jp_cp^d) dp, \label{X}
\end{align}
where the last equality is obtained by the same considerations as when treating $V^j_i$.

In the following we will consider that we are close to the case that $f$ has some symmetries. In that case $\xi_{ic}^{jd}$ takes some specific values which we will denote by $\hat{\xi}_{ic}^{jd}$. 
Let us consider the trace free part $\tilde{w}^i_j=w^i_j-\frac13\delta_j^i$ and define 
\begin{align}\label{xi}
\tilde{\xi}_{ic}^{jd}=\xi_{ic}^{jd}-\hat{\xi}_{ic}^{jd}. 
\end{align}
Then  \eqref{wev}  turns into
\begin{align}
\label{evw}(\tilde{w}_i^j)'=-\frac23\Sigma^j_i-2\tilde{w}_i^a\Sigma^j_a+\tilde{w}^b_a \Sigma^a_b (\tilde{w}_i^j+\frac13\delta^j_i)+ \Sigma^{c}_d \left(\tilde{\xi}_{ic}^{jd}+\hat{\xi}_{ic}^{jd}\right)+W^j_i.
\end{align}
In particular for the terms $w_+=\frac12(\tilde{w}^2_2+\tilde{w}^3_3)$, $w_-=\frac{1}{2\sqrt{3}}(\tilde{w}^2_2-\tilde{w}^3_3) $ we have
\begin{align}
\nonumber w_+' =&-\frac23 \Sigma_+- \tilde{w}_2^a\Sigma^2_a - \tilde{w}_3^a\Sigma^3_a +\tilde{w}^b_a \Sigma^a_b \left(w_++\frac13\right)\\
\label{w+1}&+\frac12 \Sigma^{c}_d \left(\tilde{\xi}_{2c}^{2d}+\hat{\xi}_{2c}^{2d}+\tilde{\xi}_{3c}^{3d}+\hat{\xi}_{3c}^{3d}\right)+W_+,\\
\nonumber w_-' =&-\frac23 \Sigma_-- \frac{1}{\sqrt{3}}(\tilde{w}_2^a\Sigma^2_a - \tilde{w}_3^a\Sigma^3_a) +\tilde{w}^b_a \Sigma^a_b w_- \\
\label{w-1}&+\frac{1}{2\sqrt{3}} \Sigma^{c}_d \left(\tilde{\xi}_{2c}^{2d}+\hat{\xi}_{2c}^{2d}-\tilde{\xi}_{3c}^{3d}-\hat{\xi}_{3c}^{3d}\right)+W_-,
\end{align}
where we have used the notation
\begin{align*}
&W_+= \frac12(W^2_2+W^3_3),\\
&W_-=\frac{1}{2\sqrt{3}}(W^2_2-W^3_3).
\end{align*}

\section{The equations for Bianchi VII$_0$ with reflection symmetry for massless Vlasov particles close to the isotropic case}\label{rswhu}

In the following we will deduce the relevant equations for Bianchi VII$_0$ with reflection symmetry. We will exclude the LRS case, since that case reduces to Bianchi I, which presents a completely different behaviour and was already studied in \cite{LNT2}.

We will assume that we are close to the isotropic case. This means that as in \cite{LNT2} the only non-vanishing expressions for $\hat{\xi}^{ab}_{cd}$ are
(suspending the Einstein summation convention for the next expressions):
\begin{align*}
&\hat{\xi}_{aa}^{aa}=\frac{1}{5}, \quad a=1,2,3\\ &\hat{\xi}_{ab}^{ab}=\hat{\xi}_{ab}^{ba}=\hat{\xi}_{aa}^{bb}=\frac{1}{15}, \quad a\neq b, \quad a,b=1,2,3.
\end{align*}
Using these expressions and \eqref{Sigma} we obtain that:
\begin{align*}
\Sigma^c_d \hat{\xi}^{2d}_{2c}& =  \frac{2}{15}\Sigma_+ +\frac{2\sqrt{3}}{15} \Sigma_-,\\
\Sigma^c_d \hat{\xi}^{3d}_{3c}& =  \frac{2}{15}\Sigma_+ -\frac{2\sqrt{3}}{15} \Sigma_-.
\end{align*}
Setting this into the equations for $w_\pm$ \eqref{w+1}--\eqref{w-1} we have
\begin{align}
\label{w+re}&w_+' =-\frac{8}{15} \Sigma_+- \tilde{w}_2^a\Sigma^2_a - \tilde{w}_3^a\Sigma^3_a +\tilde{w}^b_a \Sigma^a_b \left(w_++\frac13\right)+\frac12 \Sigma^{c}_d \left(\tilde{\xi}_{2c}^{2d}+\tilde{\xi}_{3c}^{3d}\right)+W_+,\\
&w_-' =-\frac{8}{15} \Sigma_-- \frac{1}{\sqrt{3}}(\tilde{w}_2^a\Sigma^2_a - \tilde{w}_3^a\Sigma^3_a) +\tilde{w}^b_a \Sigma^a_b w_- +\frac{1}{2\sqrt{3}} \Sigma^{c}_d \left(\tilde{\xi}_{2c}^{2d}-\tilde{\xi}_{3c}^{3d}\right)+W_-.\label{w-re}
\end{align}

If $f$ is reflection symmetric (cf. \cite{Rendall}), i.e.  
\begin{align*}
 f(p_1,p_2,p_3,t)=f(-p_1,-p_2,p_3,t)=f(p_1,-p_2,-p_3,t),
\end{align*}
the metric and the second fundamental form are diagonal, the evolution will preserve this symmetry. Stress energy tensor $T_{ij}$, metric $g_{ij}$ and second fundamental form $k_{ij}$ will be diagonal and also $J_a=T_{0a}=0$ since it is an integral over an odd number of momenta which vanishes due to the reflection symmetry.  Since $J_a=0$ we have from \eqref{J} that $k^i_j$ and $n^{ij}$ commute, which means there exists a basis where one can simultaneously diagonalise both $k^i_j$ and $n^{ij}$. If we thus choose eigenvectors of $k^i_j(t_0)$ as the frame and choose $g(t_0)$ to be diagonal since $J_a=0$, everything will remain diagonal. Now if $n^{ij}$ is diagonal we have that the right hand side of \eqref{J} is always zero . This comes from the fact that due to \eqref{symm}, the structure constants cannot have two equal indices in that case. As a consequence the momentum constraint will be trivially satisfied for all time.

Moreover due to this symmetry the quantity $W^i_j$ defined in \eqref{W} and the quantity $Z_{ic}^{jd}$ defined in  \eqref{ZZ} vanish as well. The reason is the same as for $J_a$. The integrals involved in the numerator for $W^i_j$ and $Z_{ic}^{jd}$ are $V^i_j$ and $X_{ic}^{jd}$ which are integrals over an odd number of momenta which due to the reflection symmetry vanish.

 As a consequence, for the reflection symmetric case, the equations for $\tilde{w}^j_i$ and ${\xi}_{ic}^{jd}$ will be identical to the equations for the Bianchi I case treated in \cite{LNT2}. In particular the equations for $w_{\pm}$ \eqref{w+re}--\eqref{w-re} in the reflection symmetric case using
 \begin{align*}
 &\tilde{w}^2_2\Sigma^2_2+\tilde{w}^3_3\Sigma^3_3= 2w_+\Sigma_++6w_-\Sigma_-,\\
 & \tilde{w}^2_2\Sigma^2_2-\tilde{w}^3_3\Sigma^3_3=2\sqrt{3}\Sigma_-w_++2\sqrt{3}\Sigma_+w_-,\\
 &\tilde{w}^b_a \Sigma^a_b=  6w_+\Sigma_++6w_-\Sigma_-,
 \end{align*}
and the fact that $W_+=W_-=0$ are 
\begin{align}
\label{w++}&{w}_+'=-\frac{8}{15}\Sigma_+   - 4w_-\Sigma_-  +6(w_+\Sigma_++w_-\Sigma_-) w_++R_+, \\
\label{w--}&{w}_-'=-\frac{8}{15} \Sigma_- - 2(w_+\Sigma_- + w_- \Sigma_+) +6(w_+\Sigma_++w_-\Sigma_-) w_-+R_-,
\end{align}
with
\begin{align*}
R_+&=\frac12\left[\Sigma_+\left(2\tilde{\xi}_{32}^{32}-2\tilde{\xi}_{21}^{21}-2\tilde{\xi}_{31}^{31}+ \tilde{\xi}_{22}^{22}+\tilde{\xi}_{33}^{33}\right)+\sqrt{3}\Sigma_- \left(\tilde{\xi}_{22}^{22}-\tilde{\xi}_{33}^{33}\right)\right],\\
R_-&= \frac{1}{2\sqrt{3}}\left[\Sigma_+\left(-2\tilde{\xi}_{21}^{21}+2\tilde{\xi}_{31}^{31}+\tilde{\xi}_{22}^{22}-\tilde{\xi}_{33}^{33}\right)+\sqrt{3}\Sigma_-  \left(\tilde{\xi}_{22}^{22}-2\tilde{\xi}_{32}^{32}+\tilde{\xi}_{33}^{33}\right)\right].
\end{align*}

The constraint equation is
\begin{align*}
    \Omega=\frac{\rho}{3H^2}= 1-\Sigma_+^2-\Sigma_-^2 + \frac{R}{6H^2}.
\end{align*}
For the reflection symmetric massless Bianchi VII$_0$ case we have the same equations for the curvature variables as in the massive case \cite{LN2}.    Let $(ijk)$ denote a cyclic permutation of $(123)$ and let us suspend the Einstein summation convention for the next three formulas. Introduce $\nu_i$ as the signs depending on the Bianchi type (Table 1 of \cite{CH}). Now define
\begin{align*}
n_i=\nu_i \sqrt{\frac{g_{ii}}{g_{jj}g_{kk}}},
\end{align*}
where the indices $(ijk)$ used in the previous formula are a cyclic permutation of $(123)$, e.g. $n_1= \nu_1 \sqrt{\frac{g_{11}}{g_{22}g_{33}}}$. Using this notation the Ricci tensor is given by (cf. (11a) of \cite{CH}) 
\begin{align*}
R^i_i=\frac{1}{2}[n_i^2 -(n_j-n_k)^2].
\end{align*}
We introduce now the curvature variables as follows:
\begin{align*}
 N_{ii}=\frac{n_i}{H}.
\end{align*}
In the Bianchi VII$_0$ case we have $\nu_1=0$ and $\nu_2=\nu_3=1$, which means that $N_{22}$ and $N_{33}$ are positive definite and the only non-vanishing structure constants are (cf. Appendix E, p. 695 of \cite{Ring}):
\begin{align}\label{sc7}
C^2_{31}=1=-C^2_{13},\quad C^3_{12}=1=-C^3_{21}.
\end{align}
The curvature expressions are
\begin{align*}
& R_{1}^1=R=-\frac{1}{2} (n_2-n_3)^2,\\
&R^2_2=-R^3_3=\frac{1}{2}(n_2^2-n_3^2).
\end{align*}
The relevant equations are
\begin{align*}
N_{22}' = N_{22} (q+2\Sigma_++2\sqrt{3}\Sigma_-),\\
N_{33}' = N_{33} (q+2\Sigma_+-2\sqrt{3}\Sigma_-),
\end{align*}
where $N_{22}>0$ and $N_{33}>0$,
which we transform into the following variables:
\begin{align*}
&N_+=\frac{N_{22}+N_{33}}{2}>0,\\
&N_-=\frac{N_{22}-N_{33}}{2\sqrt{3}},
\end{align*}
which implies
\begin{align*}
\frac{R}{6H^2}=-N_-^2,
\end{align*}
and
\begin{align*}
N_+^2-3N_-^2>0.
\end{align*}
As a consequence
\begin{align*}
\Omega = 1-\Sigma_+^2-\Sigma_-^2 - N_-^2.
\end{align*}
The relevant evolution equations are thus
\begin{align}
&\label{Sigma+}\Sigma_+'=(q-2)\Sigma_+-2N_-^2+3{w}_+\Omega,\\
&\Sigma_-'=(q-2)\Sigma_--2N_+N_-+ 3{w}_-\Omega,\\
&N_+'=(q+2\Sigma_+)N_++6\Sigma_-N_-,\\
&\label{N-}N_-'=(q+2\Sigma_+)N_-+2\Sigma_-N_+,\\
&\Omega'=2\Omega\left(q-1-3\Sigma_+w_+-3\Sigma_-w_-\right).
\end{align}

\subsection{WHU variables}

What characterises the future behaviour of solutions to the Einstein equations coupled to the equations of a perfect fluid \cite{CHe,HervikVII0,LDW,NHW, WHU} or coupled to the equations of collisionless matter with massive particles \cite{LN2} assuming Bianchi VII$_0$ symmetry is the self-similarity breaking at late times. One of the curvature variables, in the present paper denoted as $N_+$, blows up. It has been very convenient to consider the inverse of this quantity to have bounded variables at late times and to introduce another variable $\psi$ to model the oscillations which appear for late times. We apply these transformation to the present case and call them WHU variables, since it was introduced in the paper of Wainwright, Hancock and Uggla \cite{WHU} for the first time. The WHU variables are introduced as follows:
\begin{align*}
&M=\frac{1}{N_+}>0,\\
&N_-=X \sin \psi,\quad X>0,\\
&\Sigma_-=X \cos \psi, \quad X>0,
\end{align*}
so that
\begin{align}\label{Omega}
    \Omega=1-\Sigma_+^2-X^2.
\end{align}
Since $\Omega=\frac{\rho}{3H^2}\geq 0$, we have that
\begin{align}\label{bound1}
\Sigma_+^2+X^2 \leq 1.
\end{align}
The relevant equations for $\Sigma_+$, $M$, $X$ and $\psi$ are obtained from the equations (23)--(26) of \cite{LN2} by making the following replacements in terms of the relevant quantities we use here:
\begin{align*}
&S_+ = \frac{1}{6H^2} (3S^2_2 + 3 S^3_3 - 2S) = \frac{1}{6H^2} (3\pi^2_2 + 3\pi^3_3)= 3w_+ \Omega, \\
&S_- = \frac{1}{2\sqrt{3}H^2} (S^2_2 - S^3_3) = \frac{1}{2\sqrt{3}H^2} (\pi^2_2 - \pi^3_3) = 3w_- \Omega, \\
&q = 1 + \frac{R}{6H^2}+\frac16 F = 1- N_-^2+ \Sigma_+^2+\Sigma_-^2= 1+ \Sigma_+^2 + X^2 \cos 2\psi,\\
&Q =1+ \Sigma_+^2 .
\end{align*}
The relevant equations in the present case are thus
\begin{align}
&\label{M} {M}'=-M[1 + \Sigma_+^2+2\Sigma_++X^2(\cos 2\psi+3M\sin 2\psi)],\\
&\label{+} \Sigma_+'=-X^2-\Sigma_+(1 -\Sigma_+^2)+(1+\Sigma_+)X^2\cos 2\psi +3{w}_+\Omega,\\
&\label{Xev} X'=[\Sigma_+(1+\Sigma_+) +(X^2-1-\Sigma_+) \cos 2\psi ]X+ 3{w}_-\Omega\cos \psi ,\\
&\label{psi} \psi'=2M^{-1}+(1+\Sigma_+)\sin 2\psi - X^{-1}3{w}_- \Omega\sin \psi,\\
&\label{Omegaev}\Omega' = 2( \Sigma_+^2+X^2\cos 2\psi-3 \Sigma_+ w_+ - 3 X w_- \cos \psi)\Omega,\\
&\label{w+}{w}_+'=\left(-\frac{8}{15}+\alpha+6w_+^2\right)\Sigma_+ + \left[w_- ( 6w_+- 4)+\beta\right]X\cos \psi ,\\
&\label{w-}{w}_-'= \left[w_-(-2+6w_+)+\gamma\right]\Sigma_+  +\left(6w_-^2-\frac{8}{15}-2w_+ + \delta \right) X \cos \psi,
\end{align}
where
\begin{align}
\label{alpha}\alpha & = \frac12 \left(2\tilde{\xi}_{32}^{32}-2\tilde{\xi}_{21}^{21}-2\tilde{\xi}_{31}^{31}+ \tilde{\xi}_{22}^{22}+\tilde{\xi}_{33}^{33}\right),\\
 \beta  &= \frac12 \sqrt{3} \left(\tilde{\xi}_{22}^{22}-\tilde{\xi}_{33}^{33}\right),\\
\label{gamma}\gamma & =  \frac{1}{2\sqrt{3}}\left(-2\tilde{\xi}_{21}^{21}+2\tilde{\xi}_{31}^{31}+\tilde{\xi}_{22}^{22}-\tilde{\xi}_{33}^{33}\right), \\
\label{delta} \delta & = \frac12 \left(\tilde{\xi}_{22}^{22}-2\tilde{\xi}_{32}^{32}+\tilde{\xi}_{33}^{33}\right).
\end{align}

We have put the equations \eqref{M}--\eqref{Omegaev} in such a way that they are easy to compare with the equations in \cite{NHW, WHU}. They are identical to the equations in \cite{WHU} if one sets in our equations ${w}_+={w}_-=0$.

The equations for ${\xi}_{ic}^{jd}$, using the fact that $\tilde{w}^i_j$ and $\Sigma^i_j$ are tracefree and that $Z^{jd}_{ic}$ defined in \eqref{ZZ} vanishes in the reflection symmetric case,  simplify using \eqref{evxi} to:
\begin{align}\label{xixi}
\left({\xi}_{ic}^{jd}\right)'=-2\Sigma^d_f {\xi}_{ic}^{jf}-2\Sigma^j_f{\xi}_{ic}^{fd}+3\Sigma^a_b \Phi_{ica}^{jdb} + {\xi}_{ic}^{jd} \tilde{w}_a^b \Sigma^a_b .
\end{align}
More specifically we obtain:
\begin{align}
\label{xi1}&\left({\tilde{\xi}}_{22}^{22}\right)'= \left[(-4+6w_+)\Sigma_++(6w_--4\sqrt{3}) X \cos \psi \right]\left( {\tilde{\xi}}_{22}^{22} + \frac15\right) +3\Sigma^a_b  \Phi_{22a}^{22b},\\
\label{xi2}&\left({\tilde{\xi}}_{33}^{33}\right)'= \left[(-4+6w_+)\Sigma_++(6w_-+4\sqrt{3}) X \cos \psi\right]\left({\tilde{\xi}}_{33}^{33}+\frac15\right)+3\Sigma^a_b \Phi_{33a}^{33b},\\
\label{xi3}&\left(\tilde{{\xi}}_{21}^{21}\right)'=\left[(2+6w_+)\Sigma_++(6w_--2\sqrt{3})X \cos \psi \right]\left(\tilde{\xi}_{21}^{21}+\frac{1}{15}\right)+3\Sigma^a_b \Phi_{21a}^{21b},\\
\label{xi4}&\left(\tilde{{\xi}}_{31}^{31}\right)'=\left[(2+6w_+)\Sigma_++(6w_-+2\sqrt{3})X \cos \psi \right]\left(\tilde{\xi}_{31}^{31}+\frac{1}{15}\right)+3\Sigma^a_b \Phi_{31a}^{31b},\\
\label{xi5}&\left(\tilde{{\xi}}_{32}^{32}\right)'=\left[(-4+6w_+)\Sigma_++6w_-X \cos \psi \right]\left(\tilde{\xi}_{32}^{32}+\frac{1}{15}\right)+3\Sigma^a_b \Phi_{32a}^{32b}.
\end{align}
Note that the system is not closed, since one could consider the evolution equations of $\Phi_{ica}^{jdb}$, which would include other higher order terms. However the considered evolution equations will be sufficient to obtain the desired future asymptotic behaviour.

\section{Future asymptotic behaviour}\label{main}

In this section we prove the main results. We start by stating and proving Theorem \ref{bootthm}. The idea is to assume certain small data. In some sense one can see as the reference solution the radiation fluid solution of Bianchi VII$_0$ spacetimes which are not LRS. We will assume that we are close to the situation as described in Theorem 2.1, 2.3 of \cite{WHU}, Theorem 1 of \cite{NHW} or Theorem 1 of \cite{LDW} as regards the shear and curvature variables. We will assume that $\Sigma_+$, $X$ and $M$ are small. On the other hand for a perfect fluid the relation between pressure and energy density is fixed, which does not hold in the present case. Here we assume that we are close to the values corresponding to a radiation fluid as regards to the second and the fourth order moments normalised by the energy density. All these quantities are zero for a radiation fluid, here we assume they are small. 

The proof of Theorem \ref{bootthm} will be completed using a bootstrap argument and will be finished in Section \ref{closing}. In the proof of the main theorem we will start in Section \ref{estimateM} by obtaining an estimate for $M$ which is a key variable because it is responsible for the self-similarity breaking. Afterwards in Section \ref{suppress} we redefine the variables to suppress the oscillations. In Section \ref{truncated} we obtain Lemma \ref{lemtru} concerning a truncated system which will enable us to finish the proof of  Theorem \ref{bootthm} in Section \ref{closing}. Finally in Section \ref{weyl} we obtain an estimate of the Weyl curvature and certain results on the asymptotic values of the main variables.

In order to obtain the desired result it will be necessary to assume closeness to an equilibrium line given by the curve
\begin{align*}
  -\frac13{X}^2+{w}_+(1-{X}^2)=0.
\end{align*}
This will become clear when treating the truncated system in Section \ref{truncated}. This system is obtained by neglecting the oscillatory terms, the higher order momentum terms $\tilde{\xi}_{ij}^{kl}$ and focusing on the equations for $\Sigma_+$, $X$ and $w_+$. \\

Before establishing the main theorem, we summarise the different variables and the changes of variables introduced. We consider the massless Einstein-Vlasov system with reflection and Bianchi VII$_0$ symmetry, which is not LRS. The initial data are thus the diagonal elements of the metric, the diagonal elements of the second fundamental form and the particle distribution function, which satisfy the constraint equations \eqref{CE1}--\eqref{CE2}. In fact, we have shown that for reflection symmetric data equation \eqref{CE2} is trivially satisfied. The evolution equations of the metric, the second fundamental and the particle distribution function are given by \eqref{a}--\eqref{EE} and  \eqref{vlasovstructure} with \eqref{sc7}. The equations \eqref{a}--\eqref{EE} are first transformed into \eqref{Sigma+}--\eqref{N-} introducing the variables $\Sigma_+$, $\Sigma_-$, $N_+$, $N_-$, $w_+$, $w_-$ and a new dimensionless time variable $\tau$ via \eqref{deftau}. The Vlasov equation \eqref{vlasovstructure} is used to obtain the evolution equations \eqref{w++}--\eqref{w--} of $w_+$ and $w_-$. Then, the variables $\Sigma_-$, $N_+$ and $N_-$ are substituted by the WHU variables $M$, $X$ and $\psi$, which gives the evolution equations \eqref{M}--\eqref{psi} and \eqref{w+}--\eqref{w-}. Finally with the Vlasov equation and the new variables, the evolution equations \eqref{xi1}--\eqref{xi5} of $\tilde{{\xi}}_{22}^{22}$, $\tilde{{\xi}}_{33}^{33}$, $\tilde{{\xi}}_{21}^{21}$, $\tilde{{\xi}}_{31}^{31}$ and $\tilde{{\xi}}_{32}^{32}$ are obtained. Thus, initial data for the metric, the second fundamental form and the particle distribution function give rise to initial data for $M$, $\Sigma_+$, $X$, $\psi$, $w_+$, $w_-$, $\tilde{{\xi}}_{22}^{22}$, $\tilde{{\xi}}_{33}^{33}$, $\tilde{{\xi}}_{21}^{21}$, $\tilde{{\xi}}_{31}^{31}$ and $\tilde{{\xi}}_{32}^{32}$. We also need to define the following quantity:
\begin{align}
 \label{Y} Y =   -\frac13{X}^2+{w}_+(1-{X}^2).
\end{align}
Then, the following theorem holds:

\begin{thm}\label{bootthm}
Consider any $C^\infty$-solution of the massless Einstein-Vlasov system with reflection and Bianchi VII$_0$ symmetry, which is not LRS, given by the equations \eqref{vlasovstructure}, \eqref{sc7}, \eqref{M}--\eqref{psi}, \eqref{w+}--\eqref{w-} and \eqref{xi1}--\eqref{xi5} with initial data satisfying the constraint \eqref{CE1} and the conditions $X(\tau_0)\neq 0$ and $w_-(\tau_0)\neq 0$. There exists a small $ \varepsilon > 0 $ such that if initial data are given by
\[
M(\tau_0), \vert \Sigma_+(\tau_0)\vert, X(\tau_0), \vert w_+(\tau_0)\vert, \vert \tilde{\xi}_{22}^{22}(\tau_0) \vert,  \vert \tilde{\xi}_{33}^{33}(\tau_0)\vert,  \vert \tilde{\xi}_{21}^{21}(\tau_0)\vert, \vert \tilde{\xi}_{31}^{31}(\tau_0) \vert, \vert \tilde{\xi}_{32}^{32}(\tau_0) \vert < \varepsilon,
\]
then the following estimates hold:
\begin{align}
M &= O(\varepsilon e^{(-1+\varepsilon) \tau}), \label{estM}\\
 Y   &= O( \varepsilon e^{(-\frac12 + \varepsilon) \tau}),\\
 \Sigma_+  &=  O( \varepsilon e^{(-\frac12 + \varepsilon) \tau}),\label{estsigma}\\
\ \frac{w_-}{X}  &= O(1). 
\end{align}
\end{thm}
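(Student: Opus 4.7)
The plan is to establish the theorem via the bootstrap argument outlined in Section \ref{main}. Fix a large constant $K$ and let $[\tau_0, \tau_*)$ be the maximal interval on which the estimates \eqref{estM}--\eqref{estsigma} and $|w_-/X| \le K$ hold with their constants doubled, together with the auxiliary bounds $X, |w_+|, \max|\tilde\xi^{ab}_{cd}| \le 2K\varepsilon$. The goal is to re-derive all the same bounds with the original constants, so that continuity forces $\tau_* = \infty$. The $M$ bound is the easiest: from \eqref{M}, under the bootstrap, $(\ln M)' = -1 - \Sigma_+^2 - 2\Sigma_+ - X^2(\cos 2\psi + 3M\sin 2\psi) = -1 + O(\varepsilon)$, so direct integration recovers the improved version of \eqref{estM}. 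A crucial corollary via \eqref{psi} is $\psi' = 2/M + O(1) \gtrsim e^{\tau}$, i.e.\ $\psi$ is rapidly oscillating.

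Following Section \ref{suppress}, I would next introduce shifted variables $\hat\Sigma_+, \hat X, \hat w_+, \hat w_-$ that absorb the leading oscillatory contributions appearing on the right-hand sides of \eqref{+}--\eqref{w-} by subtracting terms of the form (oscillatory term)$/\psi'$, which are corrections of size $O(M)$. After this normal-form change of variables the new quantities satisfy the truncated system up to errors of size $O(M\varepsilon)$ and the $\tilde\xi$ contributions. The analytic core is then Lemma \ref{lemtru}, which concerns the truncated system obtained from \eqref{+}, \eqref{Xev}, \eqref{w+} by dropping all oscillatory and $\tilde\xi$ terms. A direct computation using \eqref{Y} gives, at leading order in $\varepsilon$, $Y' \approx -\tfrac{8}{15}\Sigma_+$ while $\Sigma_+' \approx -\Sigma_+ + 3Y$; the resulting $2\times 2$ matrix has eigenvalues $\tfrac12(-1 \pm i\sqrt{27/5})$, whose real part $-\tfrac12$ is precisely the source of the decay rate in the theorem, and identifies the curve $Y=0$ as an attracting equilibrium line along which $(X, w_+)$ may drift.

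To close the bootstrap I would return to the full system and treat every deviation from the truncated/averaged system as a perturbation. The higher moments $\tilde\xi^{ab}_{cd}$ satisfy the linear system \eqref{xi1}--\eqref{xi5} whose coefficients are of order $|\Sigma_+| + X$ multiplying $\tilde\xi$ plus $O(|\Sigma_+| + X)$ source terms (here $\Phi$ is bounded because $f$ has compact support); a Grönwall argument then keeps $\tilde\xi = O(\varepsilon)$ and its feedback into \eqref{w+}--\eqref{w-} is a subdominant perturbation of the linearised dynamics above. The ratio $w_-/X$ is controlled by differentiating using \eqref{Xev} and \eqref{w-}: the leading $X\cos\psi$ contributions cancel in the correct combination, leaving an oscillatory source of $O(1)$ amplitude whose integral against the rapidly oscillating $\psi$ is bounded.

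The main obstacle I foresee is the bookkeeping of the averaging. Because the decay of $\Sigma_+$ and $Y$ is only the marginal $-\tfrac12$, every loss of $\varepsilon$ incurred in the bootstrap, in the normal-form transformation, or in the $\tilde\xi$-feedback must be carefully absorbed into the $\varepsilon$-exponent loss already present in the statement. Verifying that the non-oscillatory residuals preserve this marginal spectral gap of the $(\Sigma_+, Y)$ block, and that the $w_-/X$ equation does not hide a secular growth through the coupling to $\Sigma_+$, is the delicate step on which the success of the scheme rests.
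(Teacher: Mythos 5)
Your proposal follows the paper's proof essentially step for step: a bootstrap argument with the same four estimates, removal of the oscillations via an $M$-weighted normal-form change of variables, an analysis of the truncated $(\Sigma_+, X, w_+)$ system to extract the $e^{-\tau/2}$ rate, Grönwall bounds on the $\tilde\xi$ feedback, and a separate argument for the ratio $w_-/X$. One genuine difference is how you obtain the $-\tfrac12$ rate from the truncated system: you linearise in $(\Sigma_+, Y)$ and observe that the matrix $\begin{pmatrix} -1 & 3 \\ -8/15 & 0 \end{pmatrix}$ has eigenvalues $\tfrac12(-1 \pm i\sqrt{27/5})$, whereas the paper instead writes down, in Lemma \ref{lemtru}, the explicit Lyapunov quadratic form $\hat u = (2\hat X^2 + 8)\hat\Sigma_+^2 - 15\hat\Sigma_+\hat Y + 45\hat Y^2$ satisfying $\hat u' + \hat u = O(\text{cubic})$; these are two faces of the same fact (one can check $A^T L + LA = -L$ with $L$ the coefficient matrix of $\hat u$), but the Lyapunov form is what actually closes the bootstrap with concrete constants, so your spectral observation still needs to be promoted to such a monotonicity estimate. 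Two smaller points: in the ratio $w_-/X$ the $X\cos\psi$ contributions do \emph{not} actually cancel when you differentiate the unbarred quantities (the combination $(6w_-^2 - \tfrac{8}{15} - 2w_+ + \delta)\cos\psi - 3(w_-/X)^2\Omega\cos\psi$ survives at $O(1)$), but this is harmless since, as you note, the surviving oscillatory source integrates to $O(M)$ by parts against $\psi' \sim 2/M$ — the paper sidesteps this by taking the ratio only after passing to the barred quantities, so the oscillation is already an $O(M)$ error. Also, the paper deliberately bootstraps at the weaker rates $e^{(-3/8 + \varepsilon)\tau}$ and $e^{\tau/4}$ rather than at the target rates with doubled constants, precisely to avoid the constant-tracking delicacy you flag in your last paragraph; adopting the paper's margin makes that concern disappear.
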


\noindent\emph{Remark}: 
In fact, the specific values of the structure constants for Bianchi $VII_0$ will not be necessary when using the Vlasov equation. It will be sufficient to use the properties of Bianchi A space-times.

\subsection*{Proof of the theorem}

We begin by observing that the relevant quantities are bounded. The quantities $\Sigma_+$, $X$ and $\Omega$ are bounded due to \eqref{Omega}--\eqref{bound1}, $w_+$ and $w_-$ are bounded due to \eqref{boundw}. Moreover $\xi^{ij}_{ij}$ can be bounded by $w^i_i$ or $w^j_j$ (no summation on repeated indices in the previous three quantities) which are bounded by $1$. The quantities $\tilde{\xi}^{ij}_{ij}$ defined in \eqref{xi}  just differ by a constant from $\xi^{ij}_{ij}$. As a consequence $\alpha$, $\beta$, $\gamma$ and $\delta$ defined in \eqref{alpha}--\eqref{delta} are bounded as well. Similarly $\Phi_{ijk}^{ijk}$ is bounded which implies that the derivatives of $\Sigma_+$, $X$, $\Omega$, $w_+$, $w_-$ and $\tilde{\xi}^{ij}_{ij}$ are also bounded. Now, the proof of the main theorem will be given in the following sections.

\subsection{Estimate of $M$}\label{estimateM}
Since $X$ and $w_+$ are small, we can assume that $Y$ is small. We will use a bootstrap argument.  Let us assume that there exists an interval $[\tau_0,\tau_1)$ where the following estimates hold:

\begin{align}
\label{bootm} M(\tau) &\leq  \varepsilon_M,\\
\label{bootsigma} \vert \Sigma_+ (\tau) \vert & \leq \varepsilon_{\Sigma} e^{(-\frac38+ \varepsilon)\tau},\\
\label{bootY} \vert Y(\tau) \vert &\leq  \varepsilon_{Y} e^{(-\frac38+ \varepsilon)\tau},\\
\label{bootw-X}\left \vert \frac{w_-}{X}(\tau) \right \vert & \leq C e^{\frac14 \tau},
\end{align}
where  $\varepsilon_M$, $\varepsilon_{\Sigma}$, $\varepsilon_Y$ are some small quantities all smaller than $\varepsilon$. The last assumption \eqref{bootw-X} will hold initially since $X$ is assumed to be different from zero and $C$ is some arbitrary constant, not necessarily small.

We want to remove the first oscillatory term of the evolution equation of $M$ and therefore we define:
\begin{align} \label{barm}
\bar{M}=\frac{M}{1-\frac14MX^2 \sin 2\psi}.
\end{align}
Then
\begin{align}\label{barM}
\frac{\bar{M}'}{\bar{M}}=\frac{-(1 + \Sigma_+)^2+ MXA}{1-\frac14MX^2 \sin 2\psi},
\end{align}
where 
\begin{align*}
A= \left(-3X +\frac12X'+\frac12X\cos2\psi (1+\Sigma_+)\right) \sin 2\psi-\frac32 \Omega w_- \cos 2\psi \sin \psi
\end{align*}
is a bounded quantity since $X$, $X'$, $\Sigma_+$, $w_-$ and $\Omega$ are bounded.
Using the bounds of the bootstrap assumptions \eqref{bootm}--\eqref{bootsigma} we obtain
\begin{align*}
-1 - C \varepsilon \leq \frac{\bar{M}'}{\bar{M}}  \leq -1 + C \varepsilon,
\end{align*}
which implies that
\begin{align}\label{boundM}
\bar{M}(\tau_0) e^{(-1 -C \varepsilon)(\tau-\tau_0)} \leq \bar{M}(\tau) \leq \bar{M}(\tau_0) e^{(-1 + C \varepsilon)(\tau-\tau_0)},
\end{align}
and
\begin{align*}
M (\tau) &\leq (1-\frac14MX^2 \sin 2\psi)(\tau) \frac{M(\tau_0)}{  (1-\frac14MX^2 \sin 2\psi)(\tau_0)} e^{(-1 + C \varepsilon)(\tau-\tau_0)}\\
&\leq M(\tau_0) \frac{1+C  \varepsilon_M}{1-CM(\tau_0)} e^{(-1 + C \varepsilon)(\tau-\tau_0)}.
\end{align*}
Choosing $M(\tau_0)$ smaller than $\frac12 \varepsilon_M$ and making $\varepsilon_M$ smaller if necessary we obtain
\begin{align*}
 M (\tau) \leq \frac{(1+C  \varepsilon_M) \varepsilon_M}{2-C\varepsilon_M} e^{(-1 + C \varepsilon)(\tau-\tau_0)}\leq\varepsilon_M e^{(-1 + C \varepsilon)(\tau-\tau_0)},
\end{align*}
which is an improvement of the bootstrap assumption \eqref{bootm} for any $\tau>\tau_0$ and which proves \eqref{estM} provided we improve the remaining bootstrap assumptions \eqref{bootsigma}--\eqref{bootY}.

\subsection{Suppressing the oscillations}\label{suppress}

Now we proceed to redefine ${\Sigma}_+$, $X$, $w_+$, $\Omega$ and $w_-$ to suppress the oscillations in these variables.
Let us make a similar change of variables as in \cite{NHW} (the change of variables for $\Sigma_+$, $X$, $M$ and $\Omega$ is the same as in \cite{NHW} if we set $w_-=0$):
\begin{align}
    &\bar{\Sigma}_+=\Sigma_+-\frac14 M(1+\Sigma_+)X^2\sin 2\psi,\\
    &\bar{X}=\frac{X}{1+\frac14M(X^2-1-\Sigma_+)\sin 2\psi}-\frac32Mw_-\Omega \sin \psi,\\
      \label{w+bar}&\bar{w}_+ = {w}_+-\frac12 M\left[w_- ( 6w_+- 4)+\beta \right]X\sin \psi,\\
      &\bar{\Omega}=\frac{\Omega}{1+\frac12 MX^2\sin 2\psi- 3 M X w_- \sin \psi},\\
    &\bar{w}_- = {w}_--\frac12 M\left(6w_-^2-\frac{8}{15}-2w_++\delta \right) X \sin \psi. 
\end{align}
Note that all the relevant quantities are bounded. Also the derivatives of $\alpha$, $\beta$, $\gamma$ and $\delta$ are bounded using \eqref{xixi}. Since we assume that $M$ is small we have that $\bar{\Sigma}_+$, $\bar{X}$, $\bar{M}$, $\bar{\Omega}$, $\bar{w}_+$, $\bar{w}_-$ are bounded.

Using  \eqref{M}--\eqref{w-} the evolution equations of all the barred quantities are as follows:

\begin{align}
&\bar{M}'=-\bar{M}\left[(1+\bar{\Sigma}_+)^2+O(M)\right],\\
\label{barsigma}&\bar{\Sigma}_+'=-\bar{X}^2-\bar{\Sigma}_+(1 -\bar{\Sigma}_+^2) +3\bar{w}_+\bar{\Omega}+ O(M),\\
\label{barX}&\bar{X}'=[\bar{\Sigma}_+(1+\bar{\Sigma}_+) +O(M)]\bar{X}+O(M)+O \left(\frac{Mw_-^2}{X} \right) ,\\
\label{barOmega}&\bar{\Omega}' =  2\left[ \bar{\Sigma}_+^2-3 \bar{\Sigma}_+ \bar{w}_+ +O(M)\right]\bar{\Omega},\\
&\label{barw+}\bar{w}_+'= \left(-\frac{8}{15}+\alpha+6\bar{w}_+^2\right)\bar{\Sigma}_+ +O(M),\\
&\label{barw-}\bar{w}_-'= \left[\bar{w}_-(-2+6\bar{w}_+)+\gamma \right]\bar{\Sigma}_+ +O(M).
\end{align}
We also want to suppress the oscillations for the terms concerning $\tilde{{\xi}}_{22}^{22}$, $\tilde{{\xi}}_{33}^{33}$, $\tilde{{\xi}}_{21}^{21}$, $\tilde{{\xi}}_{31}^{31}$, $\tilde{{\xi}}_{32}^{32}$. The equations \eqref{xi1}--\eqref{xi5} are
\begin{align*}
(\tilde{\xi}^{ij}_{ij})' = f_1(w_+,w_-,\tilde{\xi}_{ij}^{ij},\Phi^{ijb}_{ijb}) \Sigma_+ + f_2(w_+,w_-,\tilde{\xi}_{ij}^{ij},\Phi^{ijb}_{ijb} )X \cos \psi, 
\end{align*}
for some functions $f_1$ and $f_2$ of the given form (no summation of the indices). Thus, making the change of variable
\begin{align*}
{\bar{\xi}}^{ij}_{ij} =\tilde {\xi}^{ij}_{ij} - \frac12 M f_2 X \sin \psi,
\end{align*}
we obtain
\begin{align}\label{xibar}
({\bar{\xi}}^{ij}_{ij})' = f_1(\bar{w}_+,\bar{w}_-, {\bar{\xi}}^{ij}_{ij}, \Phi^{ijb}_{ijb}) \bar{\Sigma}_+ + O(M).
\end{align}

\subsection{The truncated system}\label{truncated}

As a first step we consider the truncated system formed by the evolution equations of $\bar{\Sigma}_+$ \eqref{barsigma}, $\bar{X}$ \eqref{barX}  and $\bar{w}_+$ \eqref{barw+} where we neglect the oscillatory terms with $M$ and the term $\alpha$ which is related to higher order momenta. Note that $\bar{w}_- $ basically appears only in the evolution equation of $\bar{w}_- $ and we ignore that equation for the moment.

To indicate that we are working with the truncated system we will use a hat on the variables and then the system is
\begin{align}
\label{tru1}&\hat{\Sigma}_+'=-\hat{X}^2-\hat{\Sigma}_+(1 -\hat{\Sigma}_+^2) +3\hat{w}_+(1-\hat{X}^2-\hat{\Sigma}_+^2),\\
\label{tru2}&\hat{X}'=[\hat{\Sigma}_+(1+\hat{\Sigma}_+)]\hat{X} ,\\
\label{tru3}&\hat{w}_+'= \left(-\frac{8}{15}+6\hat{w}_+^2\right)\hat{\Sigma}_+ .
\end{align}

Note that the curve $ -\frac13\hat{X}^2+\hat{w}_+(1-\hat{X}^2)=0$ at $\hat{\Sigma}_+=0$ is an invariant manifold since all derivatives of any order vanish there. This means that there is an equilibrium value for $\hat{w}_+$ which is
\begin{align*}
\hat{w}_+ =\frac{\hat{X}^2}{3(1-\hat{X}^2)}.
\end{align*}
This is the motivation for the introduction of the variable $Y$ \eqref{Y}, which in the truncated case we denote with a hat: 

\begin{align}\label{hatY}
\hat{Y}=-\frac13 \hat{X}^2+\hat{w}_+\left(1-\hat{X}^2\right).
\end{align}
We have the following lemma:

\begin{lem}\label{lemtru}
Consider the system of differential equations \eqref{tru1}--\eqref{tru3} and the function
\begin{align}\label{hatu}
\hat{u} = (2\hat{X}^2+8) \hat{\Sigma}_+^2 -15 \hat{\Sigma}_+ \hat{Y} + 45 \hat{Y}^2.
\end{align}
Then $\hat{u}'+\hat{u}=O(\hat{\Sigma}_+^3+\hat{\Sigma}_+^2\hat{Y}+ \hat{\Sigma}_+\hat{Y}^2)$.
\end{lem}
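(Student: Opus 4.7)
The plan is to treat this as a Lyapunov-function statement for the linearization of \eqref{tru1}--\eqref{tru3} in the $(\hat{\Sigma}_+,\hat{Y})$ variables: the quadratic form $\hat{u}$ is chosen precisely so that the quadratic part of $\hat{u}'+\hat{u}$ vanishes identically, leaving only cubic corrections.

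First I would derive compact expressions for $\hat{\Sigma}_+'$ and $\hat{Y}'$. From \eqref{tru1} together with the identity $3\hat{w}_+(1-\hat{X}^2)-\hat{X}^2 = 3\hat{Y}$ which follows directly from \eqref{hatY}, one obtains $\hat{\Sigma}_+' = 3\hat{Y} - \hat{\Sigma}_+ + \hat{\Sigma}_+^3 - 3\hat{w}_+\hat{\Sigma}_+^2$. Next I would differentiate $\hat{Y}$ using \eqref{hatY}, substitute \eqref{tru2} and \eqref{tru3}, and eliminate $\hat{w}_+$ via the relation $\hat{w}_+(1-\hat{X}^2)=\hat{Y}+\frac{1}{3}\hat{X}^2$. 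The decisive simplification is that the $\hat{X}^4/(1-\hat{X}^2)$ contributions generated by $-2\hat{w}_+\hat{X}^2\hat{\Sigma}_+$ (coming from $\hat{X}\hat{X}'$) and by $6(1-\hat{X}^2)\hat{w}_+^2\hat{\Sigma}_+$ (coming from $\hat{w}_+'$) cancel exactly, leaving
\[
\hat{Y}' = -\frac{2(4+\hat{X}^2)}{15}\hat{\Sigma}_+ + \frac{2\hat{X}^2\hat{Y}\hat{\Sigma}_+}{1-\hat{X}^2} + \frac{6\hat{Y}^2\hat{\Sigma}_+}{1-\hat{X}^2} - \frac{2}{3}(1+3\hat{w}_+)\hat{X}^2\hat{\Sigma}_+^2,
\]
in which every term carries a factor of $\hat{\Sigma}_+$, reflecting the invariance of the line $\{\hat{\Sigma}_+=0,\ \hat{Y}=0\}$.

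Given these formulas, computing $\hat{u}'$ via the chain rule is routine: $\hat{u}'=4\hat{X}\hat{X}'\hat{\Sigma}_+^2+2(2\hat{X}^2+8)\hat{\Sigma}_+\hat{\Sigma}_+'-15\hat{\Sigma}_+'\hat{Y}-15\hat{\Sigma}_+\hat{Y}'+90\hat{Y}\hat{Y}'$, and the first summand is already cubic by \eqref{tru2}. I would then collect the coefficient of each quadratic monomial in $\hat{u}'+\hat{u}$: that of $\hat{\Sigma}_+^2$ is $8-16+8=0$, that of $\hat{\Sigma}_+\hat{Y}$ is $-15+48+15-48=0$, and that of $\hat{Y}^2$ is $45-45=0$. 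Crucially, the $2\hat{X}^2$ correction built into the $\hat{\Sigma}_+^2$ coefficient of $\hat{u}$ is precisely what forces the additional cancellations $2-4+2=0$ for $\hat{X}^2\hat{\Sigma}_+^2$ and $12-12=0$ for $\hat{X}^2\hat{\Sigma}_+\hat{Y}$. All surviving contributions inherit a factor of $\hat{\Sigma}_+$ and are at least quadratic in $(\hat{\Sigma}_+,\hat{Y})$, with coefficients involving only the bounded quantities $\hat{X}^2/(1-\hat{X}^2)$ and $\hat{w}_+$; this yields the claimed bound $O(\hat{\Sigma}_+^3+\hat{\Sigma}_+^2\hat{Y}+\hat{\Sigma}_+\hat{Y}^2)$.

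The main obstacle I anticipate is precisely the derivation of $\hat{Y}'$: without noticing the $\hat{X}^4$ cancellation one is left with a term of the form $[\hat{X}^4/(1-\hat{X}^2)]\hat{\Sigma}_+$, independent of $\hat{Y}$, which would propagate through $\hat{u}'$ and destroy the cancellation pattern entirely. Recognising that this cancellation is forced by the parametrisation of $\hat{w}_+$ by $\hat{Y}$ along the equilibrium curve is the key conceptual step that makes the tailored choice of $\hat{u}$ work.
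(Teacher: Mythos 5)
Your proposal is correct and follows essentially the same route as the paper: rewrite $\hat{\Sigma}_+'$ in terms of $\hat{Y}$ via the identity $3\hat{w}_+(1-\hat{X}^2)-\hat{X}^2=3\hat{Y}$, compute $\hat{Y}'$ (showing every term carries a factor of $\hat{\Sigma}_+$), and verify by direct expansion that the quadratic part of $\hat{u}'+\hat{u}$ cancels identically, including the $\hat{X}^2$-weighted pieces. The only cosmetic difference is that you eliminate $\hat{w}_+$ in favour of $\hat{Y}$ and $\hat{X}^2/(1-\hat{X}^2)$, whereas the paper keeps $\hat{w}_+$ explicit and obtains the same cancellation as $-2\hat{w}_+\hat{X}^2+2\hat{w}_+\hat{X}^2=0$ after using $\hat{w}_+(1-\hat{X}^2)=\hat{Y}+\frac13\hat{X}^2$ inside $6\hat{w}_+^2(1-\hat{X}^2)$.
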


\begin{proof}
The proof is based on a straightforward computation.\\

The evolution equation for $\hat{\Sigma}_+$ using the variable $\hat{Y}$ is then
\begin{align*}
\hat{\Sigma}_+'=3 \hat{Y}-\hat{\Sigma}_+(1 -\hat{\Sigma}_+^2) -3\hat{w}_+\hat{\Sigma}_+^2. 
\end{align*}
From \eqref{hatY} we obtain the evolution equation for $\hat{Y}$ which is
\begin{align*}
\hat{Y}' & = \hat{\Sigma}_+ \left[ - 2\hat{X}^2 (1+\hat{\Sigma}_+)\left(\frac13+\hat{w}_+\right)+(1- \hat{X}^2)  \left(-\frac{8}{15}+6\hat{w}_+^2\right)\right] \\
& = \hat{\Sigma}_+ \left[ - 2\hat{X}^2 \hat{\Sigma}_+\left(\frac13+\hat{w}_+\right)-\frac{2}{15}\hat{X}^2-\frac{8}{15}+6\hat{Y}\hat{w}_+\right].
\end{align*}
Note that  $\hat{u}$ is zero for $\hat{\Sigma}_+=\hat{Y}=0$. In fact using Sylvester's criterion \cite{SR} since all leading principal minors of the matrix
\begin{align*}
\left(
\begin{matrix}
2\hat{X}^2+8&-\frac{15}{2}\\
-\frac{15}{2}&45
\end{matrix}
\right)
\end{align*}
are positive, $\hat{u}$ is positive definite.  The evolution equation for $\hat{u}$ is using \eqref{hatu} as follows:
\begin{align*}
\hat{u}' & = 4\hat{X}^2 (1+\hat{\Sigma}_+) \hat{\Sigma}_+^3 +  [2(2\hat{X}^2+8) \hat{\Sigma}_+-15\hat{Y}][ 3 \hat{Y}-\hat{\Sigma}_+(1 -\hat{\Sigma}_+^2) -3\hat{w}_+\hat{\Sigma}_+^2] \\
&\quad +(90\hat{Y} - 15 \hat{\Sigma}_+ )  \hat{\Sigma}_+ \left[ - 2\hat{X}^2 \hat{\Sigma}_+\left(\frac13+\hat{w}_+\right)-\frac{2}{15}\hat{X}^2-\frac{8}{15}+6\hat{Y}\hat{w}_+\right]. 
\end{align*}
Hence, we have
\begin{align*}
\hat{u}' +\hat{u} & = [4 \hat{X}^2(1+\hat{\Sigma}_+)   + 2 (2\hat{X}^2+8)   (\hat{\Sigma}_+ - 3 \hat{w}_+) -15\hat{Y}+10\hat{X}^2+30\hat{X}^2\hat{w}_+]\hat{\Sigma}_+^3  \\
& \quad - \left[45 \hat{w}_+ +180 \hat{X}^2 \left(\frac13 +\hat{w}_+ \right) \right]\hat{Y} \hat{\Sigma}_+^2  + 540\hat{w}_+  \hat{\Sigma}_+ \hat{Y}^2,
\end{align*}
which proves the lemma.
\end{proof}

Using the numerical solver for systems of ordinary differential equations of SciPy which is based on a Runge-Kutta-Fehlberg 
method (also called RK45 or explicit Runge-Kutta method of order 5(4)) \cite{DP}
one can solve easily the truncated system for some random numbers. Below, one can see that the variables converge rapidly to the equilibrium values.

\begin{center}
		{\includegraphics[scale=0.5]{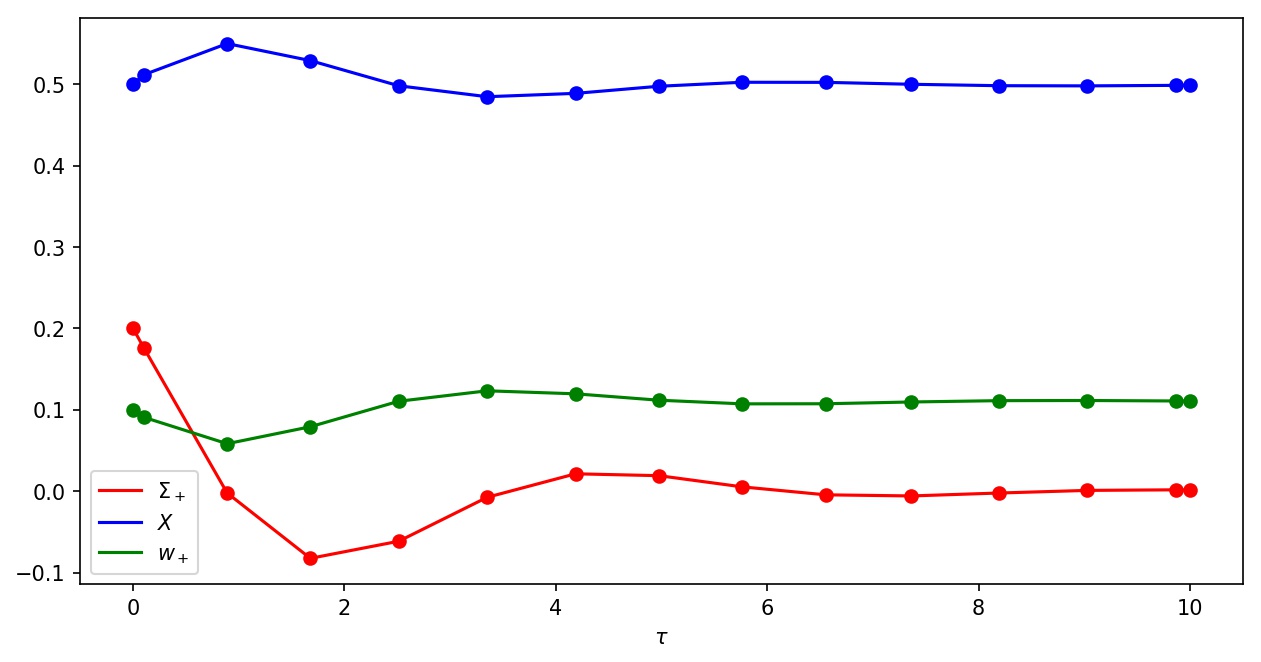}}
		\\[1ex]
		{Plot of the evolution of solutions to the truncated system.}
	\end{center}

\subsection{Closing the bootstrap argument} \label{closing}

In order to close the bootstrap argument we generalise Lemma \ref{lemtru} which corresponds to the truncated system to the general system using the estimate for $M$ and bounds for the matter terms.

Note that all the smallness assumptions for the different variables imply that the corresponding barred quantities are also small making $M$ smaller if necessary.

\subsubsection{Bounds for the matter terms}

For the matter terms it will be sufficient to show that they can be bounded by small quantities. From \eqref{xibar} and since $w_+$, $w_-$, $\xi^{ij}_{ij}$, $\Phi^{ij}_{ij}$ are bounded, we have using \eqref{estM} and \eqref{bootsigma} 
\begin{align*}
(\bar{\xi}^{ij}_{ij})' \leq C (\vert \bar{\Sigma}_+ \vert +  M) \leq C [\varepsilon_{\Sigma} e^{(-\frac38+ \varepsilon)\tau}+ \varepsilon e^{(-1+\varepsilon) \tau}],
\end{align*}
after integration yields
\begin{align*}
\bar{\xi}^{ij}_{ij}(\tau_0)- C \varepsilon \leq \bar{\xi}^{ij}_{ij}(\tau) \leq \bar{\xi}^{ij}_{ij}(\tau_0)+ C \varepsilon,
\end{align*}
and also
\begin{align}\label{boundxi}
\tilde{\xi}^{ij}_{ij} (\tau_0) - C \varepsilon \leq \tilde{\xi}^{ij}_{ij}(\tau) \leq \tilde{\xi}^{ij}_{ij} (\tau_0) +C \varepsilon.
\end{align}
As a consequence $\alpha$ and $\gamma$, defined in \eqref{alpha} and \eqref{gamma}, have a similar bound.

From \eqref{barw-} we obtain the same type of bound having in mind that $\gamma$ is bounded, obtaining:
\begin{align}\label{boundw-}
\bar{w}_-(\tau_0) - C \varepsilon \leq \bar{w}_- (\tau) \leq \bar{w}_-(\tau_0) +C \varepsilon.
\end{align}
This estimate will be particularly important in the following. It implies that $\bar{w}_-$ remains close to their initial value by choosing $ \varepsilon $ small. In particular, we may use the triangle inequality to obtain
\[
| { \bar w }_- ( \tau ) | \geq | { \bar w }_- ( \tau_0 ) | - | { \bar w }_- ( \tau_0 ) - { \bar w }_- ( \tau ) | \geq | { \bar w }_- ( \tau_0 ) | - C \varepsilon \geq \frac12 | { \bar w }_- ( \tau_0 ) |, 
\]
where the last inequality holds for sufficiently small $ \varepsilon > 0 $. In a similar way, we obtain
\begin{align}\label{boundabsw-}
\frac12 | { \bar w }_- ( \tau_0 ) | \leq | { \bar w }_- ( \tau_0 ) | \leq 2 | { \bar w }_- ( \tau_0 ) |. 
\end{align}

\subsubsection{Estimate of the quotient of $w_-$ and $X$}

Now, consider the evolution equation of $B=\frac{\bar{w}_-}{\bar{X}}$,  using the evolution equations for $\bar{X}$ \eqref{barX} and $\bar{w}_-$ \eqref{barw-} :
\begin{align*}
B' = B\left (-3+6\bar{w}_+-\bar{\Sigma}_++\frac{\gamma }{\bar{w}_-}\right) \bar{\Sigma}_+ +O \left(\frac{BM}{\bar{w}_-}\right)+O(BM)
+O\left(\frac{B^2M}{\bar{w}_-}\right)+O (M B^3),
\end{align*}
from which it follows that
\begin{align}\label{estB}
\frac{B'}{B}= \left(-3+6\bar{w}_+-\bar{\Sigma}_++\frac{\gamma }{\bar{w}_-}\right)\bar{\Sigma}_+ +O \left(\frac{M}{\bar{w}_-}\right)+O(M)
+O\left(\frac{BM}{\bar{w}_-}\right)+O (M B^2).
\end{align}
We show now that the right hand side of \eqref{estB} is a small quantity, smaller than some $\varepsilon$ using the bootstrap assumptions for $\Sigma_+$ and $\frac{w_-}{X}$ \eqref{bootsigma}, \eqref{bootw-X}, the estimates obtained for $\gamma$ and $w_-$ \eqref{boundxi}, \eqref{boundw-} and the obtained estimate of $M$ \eqref{estM} and since
\begin{align}\label{ineq1}
\left\vert \frac{\gamma \bar{\Sigma}_+ }{\bar{w}_-} \right\vert &\leq \frac{ C \vert \gamma(\tau_0) \vert   \varepsilon e^{(-\frac38+\varepsilon)\tau}}{\vert \bar{w}_-(\tau_0) \vert},\\
\frac{M}{\vert \bar{w}_- \vert}& \leq \frac{\varepsilon e^{(-1+\varepsilon)\tau}}{\vert \bar{w}_-(\tau_0)\vert },\\
\frac{\vert B \vert M}{\vert \bar{w}_- \vert } & \leq  \frac{ C\varepsilon e^{(-\frac34+\varepsilon)\tau}}{\vert \bar{w}_-(\tau_0)\vert},\\
\label{ineq4}MB^2 & \leq C \varepsilon e^{(-\frac12+\varepsilon)\tau}.
\end{align}
We have assumed that $w(\tau_0)$ is small, which by choosing $\varepsilon$ sufficiently small, implies that $\bar{w}_-(\tau_0) \neq 0$. The inverse of  $\bar{w}_-(\tau_0)$ might not be a small quantity, but since it appears always multiplied with $\varepsilon$ the terms on the right hand side of the inequalities \eqref{ineq1}--\eqref{ineq4} are all small. In fact they are integrable, so that integrating \eqref{estB} we obtain
\begin{align}\label{BB}
\vert B \vert = \left \vert \frac{\bar{w}_-}{\bar{X}} \right \vert \leq \left \vert \frac{\bar{w}_-}{\bar{X}}(\tau_0) \right \vert C .
\end{align}
Choosing $\varepsilon$ small if necessary this improves the bootstrap assumption for $\frac{w_-}{X}$ giving
\begin{align}\label{estw-X}
\frac{w_-  }{X}(\tau) = O(1).
\end{align}

\subsubsection{Estimate of $Y$}

For the estimate of $Y$ we work with the corresponding barred quantity:
\begin{align*}
 \bar{Y}= -\frac13{\bar{X}}^2+\bar{w}_+(1-\bar{X}^2)
\end{align*}
and we define:
\begin{align*}
u= (2\bar{X}^2+8) \bar{\Sigma}_+^2 -15 \bar{\Sigma}_+ \bar{Y} + 45 {\bar{Y}}^2.
\end{align*}
We want to estimate the derivative of $u$. Since we can use Lemma \ref{lemtru}, we have to focus only on the terms which are not present in the truncated case. The result after some computations is that 
\begin{align*}
u' \leq (-1+C\varepsilon) u+  O \left(\bar{\Sigma}_+^3+\bar{\Sigma}_+^2\bar{Y}+ \bar{\Sigma}_+\bar{Y}^2 + \left(  \bar{\Sigma}_+ +  \bar{\Sigma}_+ B+  \bar{Y}+\bar{Y}B \right)M\right),
\end{align*}
where we have used the estimates \eqref{boundxi} and \eqref{estw-X} and the fact that we can choose the initial data for $\vert \tilde{\xi}_{22}^{22}(\tau_0) \vert $,  $\vert \tilde{\xi}_{33}^{33}(\tau_0)\vert $,  $\vert \tilde{\xi}_{21}^{21}(\tau_0)\vert $, $\vert \tilde{\xi}_{31}^{31}(\tau_0) \vert$, $\vert \tilde{\xi}_{32}^{32}(\tau_0) \vert$ to be smaller than $\varepsilon$, so that $\alpha \leq C\varepsilon$.

Now using the bootstrap assumptions and the estimate \eqref{estw-X} we obtain that that $u= O(\varepsilon^2 e^{(-1+C\varepsilon)\tau})$ and 
\begin{align*}
& \bar{\Sigma}_+ = O(\varepsilon e^{(-\frac12+C\varepsilon)\tau}),\\
& \bar{Y} = O(\varepsilon e^{(-\frac12+C\varepsilon)\tau}).
\end{align*}
Choosing $\varepsilon$ small if necessary gives the estimates for $\Sigma_+$ and $Y$ which closes that bootstrap argument and finishes the proof of the main theorem.

\subsection{The Weyl parameter and the shear at late times}\label{weyl}

The Weyl parameter $\mathcal{W}$ or Hubble normalised Weyl curvature has the following expression (cf. (3.39) of \cite{WHU} and references therein for details):
\begin{align}\label{par}
\mathcal{W}= \frac{2X}{M} [1+O(M)].
\end{align}

We conclude with the following corollary:
\begin{cor*}
Consider the same assumptions as in Theorem \ref{bootthm}. Then, we have
\begin{align}\label{estW}
\mathcal{W} \geq C e^{(1-\varepsilon)\tau},
\end{align}
and
\begin{align*}
&\lim_{\tau \rightarrow \infty} M = \lim_{\tau \rightarrow \infty } \Sigma_+ = 0,\\
&\lim_{\tau \rightarrow \infty} X = X_{\infty}, \quad \lim_{\tau \rightarrow \infty} w_+ = (w_+)_\infty,\\
&\lim_{\tau \rightarrow \infty} w_- = (w_-)_{\infty},
\end{align*}
with
\begin{align}\label{rel}
(w_+)_{\infty}= \frac{X_{\infty}^2}{3(1-X_{\infty}^2)}.
\end{align}

\end{cor*}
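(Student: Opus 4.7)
I would deduce the entire corollary from the exponential estimates of Theorem \ref{bootthm} together with the barred-variable evolution equations \eqref{barX}, \eqref{barw-} and the definition \eqref{Y}. The limits $\lim M=0$ and $\lim\Sigma_+=0$ are immediate from \eqref{estM} and \eqref{estsigma}, so the real work is to produce the finite limits $X_\infty$, $(w_+)_\infty$, $(w_-)_\infty$, to verify \eqref{rel}, and to establish the Weyl blow-up \eqref{estW}.

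\textbf{Limits of $X$, $w_-$, $w_+$.} For $\bar X$, every term on the right-hand side of \eqref{barX} is integrable over $[\tau_0,\infty)$: $\bar X$ is bounded, $\bar\Sigma_+$ and $M$ decay exponentially by Theorem \ref{bootthm}, and the term $O(Mw_-^2/X)$ collapses to $O(M)$ because $w_-/X=O(1)$ by \eqref{estw-X}. Hence $\bar X$ admits a finite limit, and so does $X=\bar X+O(M)$, producing $X_\infty$. An analogous argument applied to \eqref{barw-}, in which the coefficient of $\bar\Sigma_+$ is bounded thanks to the estimate \eqref{boundxi} on $\gamma$, yields $(w_-)_\infty$. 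For $w_+$ I would invert \eqref{Y} as $w_+=(Y+\tfrac13 X^2)/(1-X^2)$ and pass to the limit using $Y\to 0$ from Theorem \ref{bootthm}, $X\to X_\infty$, and the fact that $X_\infty<1$, which is guaranteed by the smallness of the initial data together with the constraint $\Sigma_+^2+X^2\le 1$. This simultaneously delivers $(w_+)_\infty$ and the algebraic relation \eqref{rel}.

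\textbf{Weyl blow-up and the main obstacle.} The estimate \eqref{estW} follows by inserting into \eqref{par} the upper bound \eqref{estM} together with a strictly positive lower bound $X\ge c>0$ valid for all $\tau$ sufficiently large. Producing this lower bound is the principal obstacle, and I would extract it from ingredients already present in the proof of Theorem \ref{bootthm}: the triangle inequality argument preceding \eqref{boundabsw-} gives $|\bar w_-(\tau)|\ge\tfrac12|\bar w_-(\tau_0)|$, which combined with $w_-=\bar w_-+O(M)$ forces $|w_-|$ to stay bounded away from zero for large $\tau$; coupling this with $|w_-|/X\le C$ from \eqref{estw-X} yields $X(\tau)\ge c>0$, so in particular $X_\infty\ge c$. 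With this in hand, \eqref{par} and \eqref{estM} give $\mathcal{W}\ge C' e^{(1-\varepsilon)\tau}$, which is exactly \eqref{estW}.
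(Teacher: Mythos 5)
Your proposal is correct and the overall architecture matches the paper's: exploit the decay of $M,\Sigma_+,Y$ from Theorem~\ref{bootthm}, deduce a strictly positive lower bound on $X$ from the $\bar w_-$ estimates and $w_-/X=O(1)$, feed that into \eqref{par}, and extract the limits from the barred evolution equations. Where you genuinely diverge is in how you produce the limits of $\bar X$, $\bar w_\pm$: you observe that the right-hand sides of \eqref{barX} and \eqref{barw-} are in $L^1([\tau_0,\infty))$ — exponential decay of $\bar\Sigma_+$ and $M$, boundedness of the remaining factors, and $w_-^2/X=w_-\cdot O(1)$ — and integrate the ODE directly. The paper instead bounds the second derivatives of the barred quantities and invokes Arzel\`a--Ascoli, then ``takes the limit in the differential equations.'' Your integrability route is more elementary and, frankly, more transparent: it avoids the compactness detour and does not require bounding any $\psi'$-type terms in the second derivatives. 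A second small divergence: you obtain $(w_+)_\infty$ and the relation \eqref{rel} simultaneously by inverting $Y=-\tfrac13 X^2+w_+(1-X^2)$, rather than deriving a limit for $\bar w_+$ from \eqref{barw+} and then substituting; this is tidier. One point you should make precise: you justify $X_\infty<1$ by ``smallness of initial data together with the constraint,'' but that constraint only gives $X_\infty\le1$. The clean argument is the one you already have available: your lower bound $X\ge c>0$ and $Y\to0$ force $w_+(1-X^2)\to X_\infty^2/3>0$, and since $w_+$ is bounded (by \eqref{boundw}) this is impossible unless $1-X_\infty^2>0$. With that wording fixed, the proof is complete.
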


\begin{proof}
Using the estimate \eqref{BB} and the estimate for $\bar{w}_-$ \eqref{boundw-} making $\varepsilon$ small if necessary, we obtain a lower bound for $X$
\begin{align}\label{lowerX}
X \geq  C X(\tau_0) \left\vert \frac{ w_-(\tau)}{w_-(\tau_0)}\right \vert \geq C X(\tau_0) \left \vert 1+ O\left(\frac{\varepsilon}{w_-(\tau_0)}\right) \right \vert .
\end{align}
This inequality together with the estimate for $M$ \eqref{estM} give the desired conclusion for the Hubble normalised Weyl curvature \eqref{estW}. 

Note that in fact $X$ will be small for all times. From \eqref{Y} we can express $X$ in terms of $Y$ and $w_+$:
\begin{align*}
{X}^2 = \frac{w_+-Y}{\frac13+w_+}.
\end{align*}
Since we have shown that $w_+$ will be small for all times and $Y$ will tend to zero, we have that $X$ will be the square of a small quantity. In particular, we have that $X\simeq \sqrt{3w_+}$.
 Moreover due to \eqref{estw-X} $w_-$ will be bounded by the same order as $X$.

In order to prove the second part of the corollary, note that the second derivatives of the barred quantities $\bar{X}$, $\bar{\Sigma}_+$, $\bar{M}$, $\bar{w}_+$ and $\bar{w}_-$ are bounded. The terms we have collected in $O(M)$ in the differential equations for the barred quantities \eqref{barM}, \eqref{barX}, \eqref{barsigma}, \eqref{barw+}, \eqref{barw-} are polynomials of $\bar{X}$, $\bar{\Sigma}_+$, $\bar{M}$, $\bar{w}_+$, $\bar{w}_-$, $\cos \psi$ and $\sin \psi$, all multiplied by $M$. All the possible derivatives of these quantities are bounded, since the only problematic term comes from the derivative of $\psi$. The quantity $\frac{1}{M}$ is absorbed by the factor $M$ and the quantity $\frac{1}{X}$ is now not problematic anymore since we have shown that it $X$ has the lower bound \eqref{lowerX}. Since $\bar{X}$, $\bar{\Sigma}_+$, $\bar{M}$, $\bar{w}_+$, $\bar{w}_-$ and their first and second derivatives are bounded we can apply the Arzel\`a-Ascoli theorem.
The limits for $\bar{M}$, $\bar{\Sigma}_+$ are obtained from the estimates \eqref{estM}, \eqref{estsigma} of the theorem. Taking the limit in the differential equations for $\bar{X}$, $\bar{w}_+$ and $\bar{w}_-$ we obtain that these quantities also have limits. Taking the limits in the definitions of $\bar{X}$, $\bar{\Sigma}_+$, $\bar{M}$, $\bar{w}_+$, $\bar{w}_-$ and $\bar{Y}$ in terms of the unbarred quantities together with the limit of $M$ we obtain the desired conclusion.
\end{proof}

\section{Conclusion and Discussion}

Our results show that one can have a homogeneous cosmological model with massless particles where the shear is arbitrarily small and remains small, and the anisotropy of the matter distribution is arbitrarily small and remains small, but where the space-time is far from and remains far from an isotropic spacetime as the corollary shows.

In particular we have {described} the asymptotic behaviour of reflection symmetric solutions to the Einstein-Vlasov system with Bianchi VII$_0$ symmetry. The solutions tend to the equilibrium solution of what we have called the truncated system assuming small data.

In the massive case it was shown that reflection symmetric solutions to the Einstein-Vlasov system with Bianchi VII$_0$ symmetry behave as Einstein-dust solutions assuming small data (cf. Theorem 1 of \cite{LN2}). 

In the massless case, the relation between pressure (trace of energy momentum tensor) and energy density is fixed and given by \eqref{massless}. However here we have shown that in the massless case, the behaviour differs from that of a radiation fluid, in the sense that we obtain an exponential decay rate, while the decay rate for a radiation fluid, both tilted and non-tilted, is polynomial \cite{LDW,NHW,WHU}. 

Another difference is that in the collisionless case the shear will always remain small, but will not tend to zero. If we come back to the question of what happens with the spacetime if the matter distribution is almost isotropic, the answer to this question would be in the case considered here given by our corollary. In particular the anisotropy of the Hubble normalised trace free part of the second fundamental form is related to the anisotropy of the energy momentum tensor by \eqref{rel}.

To summarise, there are important quantitive and qualitative consequences when analysing the future of solutions to the Einstein equations with Bianchi VII$_0$ symmetry coupled to a perfect fluid or to collisionless matter. It would be very interesting to analyse the Einstein equations with this symmetry coupled to the Boltzmann equation. The different collision kernels might make important differences. However if collisions between the particles are allowed, the reflection symmetry of the particle distribution function assumed here, will not be preserved. Thus an interesting future project would be to first remove this symmetry assumption and to consider the non-diagonal case as was done previously for Bianchi II and VI$_0$ spacetimes \cite{E4} and second, generalise to the Boltzmann case as was done for Bianchi I spacetimes \cite{LN}.
 
\section*{Acknowledgements}

The authors thank Paul Tod from the University of Oxford for helpful discussions. H. Lee was supported by the Basic Science Research Program through the
National Research Foundation of Korea (NRF) funded by the Ministry of Science, ICT \& Future Planning (NRF-2018R1A1A1A05078275). {E. Nungesser has been supported by Grant MTM2017-85934-C3-3-P of Agencia Estatal de Investigaci\'{o}n (Spain).

\end{document}